\documentclass[runningheads]{llncs}
\usepackage[T1]{fontenc}
\usepackage{graphicx}
\usepackage{amsmath}

\spnewtheorem{observation}[theorem]{Observation}{\bfseries}{\itshape}

\begin{document}
\title{Colored Interaction-Profile Realization: Complexity of Matching-Match on Spiders}

\titlerunning{Complexity of Matching-Match on Spiders}

\author{Ilie Dumitru\inst{1} \and
Adrian Micl\u au\c s\inst{1} \and
Alexandru Popa\inst{1, 2}}
\authorrunning{Dumitru et al.}

\institute{Department of Computer Science, University of Bucharest, Str. Academiei 14, Bucharest, 010014, Romania \and
National Institute for Research and Development in Informatics, Bulevardul Mareșal Alexandru Averescu 8-10, Bucharest, 011555, Romania}
\maketitle              % typeset the header of the contribution
\begin{abstract}
Network motifs and colored local interaction patterns provide a useful way to describe the structure of complex networks. Motivated by an inverse realization perspective, we study the problem of assigning colors to the vertices of a fixed graph so that its edges realize a prescribed multiset of colored pairwise interactions. This problem is formalized by the Matching-Match Puzzle, introduced by Iburi and Uehara.

We investigate how its computational complexity depends on the number of colors and on the structure of the host graph. We first prove that Matching-Match is NP-complete with only two colors, even when no vertex is precolored and the graph has maximum degree three. We then focus on spiders. We show that the problem is W[1]-hard parameterized by the number of colors even on spiders with only the body precolored. In contrast, for spiders whose legs have length at most two, we give a fixed-parameter tractable algorithm parameterized by the number of colors, allowing arbitrary precoloring. Finally, we prove NP-completeness for spiders whose legs all have length exactly three when precoloring is allowed.

\keywords{Network motifs  \and NP-hardnesss \and Parametrized complexity.}
\end{abstract}

\section{Introduction}

\subsection*{Motivation}

Complex networks are commonly studied through small recurring patterns of interactions, known as \emph{network motifs}~\cite{Milo2002NetworkMotifs}. Such patterns have been observed in biological, ecological, technological, and information-processing networks and have been proposed as basic structural building blocks of complex systems~\cite{Milo2002NetworkMotifs,Alon2007NetworkMotifs}. When the vertices of a network represent entities with different functions or types, this idea can be refined through \emph{colored motifs}, where colors encode the roles of the participating vertices~\cite{Adami2011ColoredMotifs,Qian2011ColoredMotifs}.

Motif-based approaches usually start from an already labeled network and ask which local patterns occur in it. This naturally suggests an inverse question: given the topology of a network and a prescribed collection of colored interactions, can the vertices be labeled so that these interactions are realized simultaneously? We study this type of realization problem through the \textsc{Matching-Match Puzzle}, introduced by Iburi and Uehara~\cite{IburiUehara2024}. Although the problem originates from a physical puzzle, it has a natural interpretation as the exact realization of a prescribed colored interaction profile on a fixed graph.

\subsection*{Informal Problem Definition}

An instance of \textsc{Matching-Match} consists of a graph whose vertices may be partially precolored and a collection of colored sticks, one for each graph edge. Every stick has a color assigned to each of its two endpoints. The objective is to place the sticks on the graph edges so that the prescribed colors of precolored vertices are respected and all stick endpoints incident with the same initially uncolored vertex have a common color.

Since the number of sticks equals the number of graph edges, every solution assigns exactly one stick to each edge. Equivalently, each stick specifies an unordered pair of colors, and the problem asks whether the graph can be vertex-colored so that the multiset of color pairs induced by its edges is exactly the multiset prescribed by the sticks. From this perspective, the sticks describe the desired colored interaction profile, while the vertex coloring realizing this profile is unknown.

\subsection*{Previous and Related Work}

Network motifs were introduced as statistically significant recurring subgraphs of complex networks by Milo et al.~\cite{Milo2002NetworkMotifs}. Colored variants were subsequently considered to incorporate functional information associated with the network vertices~\cite{Adami2011ColoredMotifs,Qian2011ColoredMotifs}. A related but different algorithmic problem is \textsc{Graph Motif}, motivated by biological networks, where one searches for a connected subgraph whose vertex colors realize a prescribed multiset~\cite{Fellows2011GraphMotif,Betzler2011GraphMotif}. Graph Motif has received considerable attention from the viewpoint of parameterized and structural complexity~\cite{BonnetSikora2017GraphMotif}. In contrast, Matching-Match keeps the entire host graph fixed and asks whether its unknown vertex colors can realize a prescribed collection of colored edge interactions.

The computational complexity of puzzles and games has also been extensively investigated as a source of natural combinatorial problems~\cite{HearnDemaine2009,Uehara2023Puzzles}. Within this line of research, Iburi and Uehara introduced the \textsc{Matching-Match Puzzle} and initiated its complexity study~\cite{IburiUehara2024}. They proved NP-completeness even on highly restricted graph classes, including paths and spiders, while obtaining polynomial-time algorithms for complete graphs and stars. Of particular relevance to our work, they showed that completely uncolored spiders whose legs have length at most two can be solved in polynomial time. They also identified the number of colors, additional precoloring on short spiders, and the maximum length of spider legs as natural directions for further investigation~\cite{IburiUehara2024}.

\subsection*{Our Results}

We continue the complexity study of Matching-Match, focusing on the interaction between the number of colors and the structure of spiders.

\paragraph{Hardness with two colors}

We first prove that \textsc{Matching-Match} is NP-complete when only two colors are available, even if no graph vertex is initially precolored and the host graph has maximum degree three. Since the one-color case is trivial, this establishes hardness for the minimum nontrivial number of colors. In particular, the problem is para-NP-hard when parameterized by the number of colors on general graphs.

\paragraph{Parameterized hardness on spiders}

We then show that restricting the graph to a spider does not make the number of colors a tractable parameter. We prove that \textsc{Matching-Match} is W[1]-hard parameterized by the number of colors even when the host graph is a spider and only its body is precolored. Thus, the problem remains parameterized intractable on a subclass of trees, despite the fact that spiders have treewidth one.

\paragraph{Fixed-parameter tractability for short spiders}

The hardness above relies on spiders with arbitrarily long legs. In contrast, we prove that \textsc{Matching-Match} is fixed-parameter tractable parameterized by the number of colors when every leg has length at most two, even under arbitrary precoloring. Our algorithm groups legs according to their precoloring patterns and reduces feasibility to an integer linear program whose dimension depends only on the number of colors. This gives parameterized progress on the precoloring problem left open by Iburi and Uehara~\cite{IburiUehara2024}, although its classical polynomial-versus-NP-complete complexity remains open when the number of colors is unbounded.

\paragraph{Hardness at leg length three}

Finally, we consider the first leg length beyond the previously known tractable case. We prove that \textsc{Matching-Match} is NP-complete on spiders in which every leg has length exactly three when precoloring is allowed. Together with the polynomial-time result of Iburi and Uehara for completely uncolored spiders of leg length at most two~\cite{IburiUehara2024}, this substantially narrows the complexity boundary with respect to leg length. The complexity of completely uncolored spiders whose legs have length exactly three remains open.

\section{Preliminaries}
\label{sec:preliminaries}

Throughout the paper, graphs are finite, simple, and undirected. For a graph $G=(V,E)$, we write $n=|V|$ and $m=|E|$. For a vertex $v\in V$, its degree is denoted by $\deg_G(v)$, and $\Delta(G)$ denotes the maximum degree of $G$.

A path of length $\ell$ is a sequence of vertices $v_0,v_1,\ldots,v_\ell$ such that $v_iv_{i+1}\in E$ for every $0\le i<\ell$. A tree is a connected acyclic graph. A \emph{spider} is a tree containing a unique vertex of degree greater than two. This vertex is called the \emph{body} of the spider. Every maximal path starting at the body and ending at a leaf is called a \emph{leg}; the length of a leg is its number of edges. A star is a spider in which every leg has length one.

We now recall the \textsc{Matching-Match Puzzle} introduced by Iburi and Uehara~\cite{IburiUehara2024}. Let $C=\{1,\ldots,c\}$ be a set of colors. An instance consists of a graph $G=(V,E)$, a set $S$ of $m=|E|$ sticks, a partial coloring $\mathcal C_0:V\rightarrow \{0,1,\ldots,c\}$ of the graph vertices, and a coloring $\mathcal C_1$ of all stick endpoints with colors from $C$. The value $\mathcal C_0(v)=0$ means that $v$ is initially uncolored.

Each stick $s\in S$ has two colored endpoints. Since the graph is undirected, the orientation of a stick on an edge is irrelevant. We therefore define the \emph{type} of a stick to be the unordered pair $\{i,j\}$ consisting of the colors of its endpoints. For $1\le i\le j\le c$, let $s_{ij}$ denote the number of sticks of type $\{i,j\}$.

A mapping of the stick endpoints to graph vertices is \emph{feasible} if every stick is mapped to an edge of $G$, every graph edge receives a stick, every precolored graph vertex agrees with the colors of all stick endpoints mapped to it, and all stick endpoints mapped to the same initially uncolored graph vertex have the same color. Since $|S|=|E|$, every feasible mapping induces a bijection between the sticks and the graph edges. Hence it is often more convenient to use an equivalent coloring formulation.

\begin{observation}
\label{obs:coloring-formulation}
An instance of \textsc{FMMP} is feasible if and only if there exists a coloring
$\phi:V\rightarrow C$ extending $\mathcal C_0$ such that, for every
$1\leq i\leq j\leq c$, the number of edges $uv\in E$ satisfying
$\{\phi(u),\phi(v)\}=\{i,j\}$ is exactly $s_{ij}$.
\end{observation}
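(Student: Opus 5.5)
We prove the two directions separately, treating the multiset of stick types as the invariant that matches the two formulations.

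\emph{Feasible mapping gives a coloring.} Fix a feasible mapping. Since $|S|=|E|$ and every edge receives a stick, the map from sticks to edges is a bijection. Define $\phi(v)$ as follows. If $\mathcal C_0(v)\neq 0$, set $\phi(v)=\mathcal C_0(v)$. Otherwise, let $\phi(v)$ be the common color of the stick endpoints mapped to $v$; feasibility guarantees these endpoints agree. Every vertex of positive degree receives at least one endpoint, so this is well defined. An isolated uncolored vertex receives no endpoint, and we give it an arbitrary color, which affects no edge.

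Then $\phi$ extends $\mathcal C_0$. Moreover, for every stick placed on an edge $uv$, the endpoint colors equal $\phi(u)$ and $\phi(v)$. For precolored endpoints this uses the agreement condition, and for uncolored endpoints it uses the definition of $\phi$. Hence the type of the stick is $\{\phi(u),\phi(v)\}$. By the bijection, the number of edges with color pair $\{i,j\}$ equals the number of sticks of type $\{i,j\}$, which is $s_{ij}$.

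\emph{Coloring gives a feasible mapping.} Given such a $\phi$, for each pair $\{i,j\}$ the set $E_{ij}$ of edges with $\{\phi(u),\phi(v)\}=\{i,j\}$ has size $s_{ij}$. This equals the number of sticks of that type. Choose any bijection between $E_{ij}$ and the sticks of type $\{i,j\}$. Orient each stick on its edge $uv$ so that the endpoint colored $\phi(u)$ lies at $u$ and the endpoint colored $\phi(v)$ lies at $v$; this is possible because the stick type is exactly $\{\phi(u),\phi(v)\}$.

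The sets $E_{ij}$ partition $E$, and the stick classes partition $S$. So every edge gets exactly one stick and every stick is used. All endpoints at a vertex $v$ have color $\phi(v)$, which equals $\mathcal C_0(v)$ whenever $v$ is precolored. Hence the mapping is feasible.

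The main subtlety is the first direction's bookkeeping: making sure $\phi$ is well defined. This is handled by the degree argument and the arbitrary choice for isolated uncolored vertices above. The rest is a counting argument.
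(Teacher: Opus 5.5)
Your proof is correct and follows the same two-direction argument as the paper: a feasible mapping induces $\phi$ by reading off endpoint colors, and conversely equal type counts let you assign sticks to edges bijectively, type by type. You are more careful than the paper's one-line justification about details it glosses over (isolated uncolored vertices, the bijection from $|S|=|E|$, orienting each stick), but the approach is the same.
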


Indeed, every feasible mapping assigns a unique color to each initially uncolored vertex and therefore induces such a coloring $\phi$. Conversely, if such a coloring exists, then for every type $\{i,j\}$ the number of graph edges of that type equals the number of sticks of that type, so the sticks can be assigned bijectively to graph edges type by type. We use \textsc{FMMP} to denote the corresponding decision problem.

\begin{problem}[Feasible Matching-Match Puzzle (\textsc{FMMP})]
Given a graph $G=(V,E)$, a set $S$ of $|E|$ colored sticks using colors from $C=\{1,\ldots,c\}$, and a partial coloring $\mathcal C_0:V \rightarrow \{0, 1,\ldots,c\}$, determine whether there exists a feasible mapping. Equivalently, determine whether there exists a coloring $\phi : V \rightarrow C$ extending $\mathcal C_0$ whose edge-type multiplicities coincide with the stick-type multiplicities.
\end{problem}

An instance is called \emph{uncolored} if $\mathcal C_0(v)=0$ for every $v\in V$. When the body of a spider is the only precolored vertex, all vertices on its legs are initially uncolored.

\section{Hardness with Two Colors}
\label{sec:two-colors}

We begin by studying the influence of the number of colors. Iburi and Uehara asked whether bounding the number of colors may lead to tractability~\cite{IburiUehara2024}. We show that this is not the case on general graphs: two colors, which is the minimum nontrivial number, already suffice for NP-completeness. Moreover, the hardness holds without any precolored vertices and on graphs of maximum degree three.

\begin{theorem}
\label{thm:two-colors}
\textsc{FMMP} is NP-complete even when $c=2$, no vertex is precolored, and the input graph has maximum degree at most $3$.
\end{theorem}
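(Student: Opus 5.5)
The plan is to use Observation~\ref{obs:coloring-formulation} and reduce from \textsc{Independent Set} on cubic ($3$-regular) graphs, which is NP-complete (Garey, Johnson and Stockmeyer). Membership in NP is immediate: a coloring $\phi$ is a certificate, and we can count the edge types in linear time.

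For the reduction, take a cubic graph $G=(V,E)$ and an integer $k$, and build the uncolored \textsc{FMMP} instance on the same graph $G$ with $c=2$ and stick multiplicities
\[
s_{11}=0,\qquad s_{12}=3k,\qquad s_{22}=|E|-3k .
\]
If $3k>|E|$, output a fixed trivial no-instance. The host graph has maximum degree $3$ and no vertex is precolored, as the theorem requires. The construction is clearly polynomial.

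For correctness, let $X=\phi^{-1}(1)$ be the set of vertices receiving color~$1$. Then $\sum_{v\in X}\deg(v)=2e(X)+e(X,V\setminus X)$, where $e(X)$ counts edges inside $X$ and $e(X,V\setminus X)$ counts edges between $X$ and the rest. Since $G$ is cubic, the left-hand side equals $3|X|$.

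Forward direction: suppose $X$ is an independent set of size exactly $k$ (an independent set of size at least $k$ can be shrunk to size $k$). Coloring $X$ with color~$1$ and all other vertices with color~$2$ gives $e(X)=0$. The identity then yields $e(X,V\setminus X)=3k$, and the remaining $|E|-3k$ edges are of type $\{2,2\}$.

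Backward direction: $s_{11}=0$ forces $e(X)=0$, so $X$ is independent. The identity then gives $3|X|=s_{12}=3k$, so $|X|=k$.

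I expect no step to be a real obstacle. The main point is choosing a source problem whose constraint is an \emph{exact} count. Starting from minimum bisection would only give a ``cut $\le t$'' condition, which does not match the exact multiplicities of \textsc{FMMP}. Independent set avoids this, because setting $s_{11}=0$ turns an exact count into a hard constraint, and regularity turns the bichromatic count into an exact cardinality constraint.
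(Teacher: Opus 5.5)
Your reduction is correct, and it takes a genuinely different route from the paper.

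The paper reduces from \textsc{Max-Cut} on cubic graphs. It uses two gadgets:
\begin{enumerate}
\item Two oppositely colored copies of $H$, so that the monochromatic edges split evenly into $m-t$ edges of type $\{1,1\}$ and $m-t$ of type $\{2,2\}$.
\item $2(m-k)$ isolated buffer edges that absorb the slack between the actual cut size $t$ and the threshold $k$.
\end{enumerate}
Its reverse direction is then a pigeonhole argument that finds a copy with at least $k$ bichromatic edges.

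You avoid both gadgets. Setting $s_{11}=0$ makes independence a hard constraint. The identity $3|X|=2e(X)+e(X,V\setminus X)$ on a $3$-regular graph turns the exact bichromatic count into the exact cardinality $|X|=k$. The ``at least $k$'' threshold is handled by the downward closure of independent sets rather than by buffer edges.

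What your route buys is that the host graph is the source graph itself. Hardness therefore holds on $3$-regular graphs, and even on planar cubic graphs, where \textsc{Independent Set} remains NP-complete. The paper's instance, by contrast, is disconnected and has leaves. Its approach also cannot yield planarity, since \textsc{Max-Cut} is polynomial on planar graphs.

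One small point should be stated explicitly: why the shortcut for $3k>|E|$ is sound. By your own identity, every independent set satisfies $3|X|=e(X,V\setminus X)\le|E|$, so the source instance really is a no-instance in that case. The fixed no-instance you output must also meet the theorem's restrictions. A path on three vertices with one stick of type $\{1,1\}$ and one of type $\{2,2\}$ works.
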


\begin{proof}
Membership in NP follows immediately, since a mapping of the sticks to the graph edges can be verified in polynomial time.

For NP-hardness, we reduce from \textsc{Max-Cut} on cubic graphs. Alimonti and Kann showed that \textsc{Max-Cut} is APX-complete even on cubic graphs~\cite{AlimontiKann2000}. In particular, its decision version is NP-complete.

Let $(H,k)$ be an instance of \textsc{Max-Cut}, where $H=(W,F)$ is a cubic graph and $m=|F|$. We ask whether $H$ contains a cut of size at least $k$.

We construct an instance $I=(G,S,2,\mathcal C_0,\mathcal C_1)$ of \textsc{FMMP}. The graph $G$ consists of two vertex-disjoint copies $H_1$ and $H_2$ of $H$, together with $2(m-k)$ independent edges. All vertices of $G$ are initially uncolored. Hence $|E(G)|=2m+2(m-k)=4m-2k$.

We use the two colors $1$ and $2$. The set of sticks consists of $m-k$ sticks of type $\{1,1\}$, $2m$ sticks of type $\{1,2\}$, and $m-k$ sticks of type $\{2,2\}$. Thus $|S|=(m-k)+2m+(m-k)=4m-2k=|E(G)|$.

The construction can be performed in polynomial time. Moreover, since $H$ is cubic and every additional component is a single edge, $\Delta(G)\leq 3$.

We prove that $H$ has a cut of size at least $k$ if and only if the constructed instance is feasible.

For the forward direction, suppose that $H$ has a cut $(A,W\setminus A)$ of size $t\geq k$. Color the vertices of $H_1$ by assigning color $1$ to the vertices corresponding to $A$ and color $2$ to the remaining vertices. Color $H_2$ in the opposite way: vertices corresponding to $A$ receive color $2$, and all other vertices receive color $1$.

In each copy, exactly $t$ edges cross the cut and therefore have type $\{1, 2\}$. Across the two copies, there are consequently $2t$ edges of type $\{1, 2\}$.

Now consider the $m-t$ edges of $H$ that do not cross the cut. If such an edge has type $\{1,1\}$ in $H_1$, then the corresponding edge has type $\{2,2\}$ in $H_2$, and conversely. Therefore, over $H_1\cup H_2$, there are exactly $m-t$ edges of type $\{1,1\}$ and exactly $m-t$ edges of type $\{2,2\}$.

After assigning sticks of the corresponding types to all edges of $H_1$ and $H_2$, the remaining sticks are $t-k$ sticks of type $\{1,1\}$, $t-k$ sticks of type $\{2,2\}$, and $2(m-t)$ sticks of type $\{1, 2\}$. Their total number is $2(t-k)+2(m-t)=2(m-k)$, which is exactly the number of independent buffer edges.

Since the endpoints of every buffer edge are uncolored and have no other incident edges, we may color them according to the type of the stick assigned to that edge. Hence all remaining sticks can be placed on the buffer edges, and the constructed instance is feasible.

For the reverse direction, suppose that the constructed \textsc{FMMP} instance is feasible. Since there are exactly $2m$ sticks of type $\{1, 2\}$, exactly $2m$ graph edges must have endpoints of different colors.

At most $2(m-k)$ of these bichromatic edges can belong to the independent buffer edges. Therefore, at least $2m-2(m-k)=2k$ bichromatic edges belong to $H_1\cup H_2$.

Consequently, at least one of the two copies $H_1$ and $H_2$ contains at least $k$ bichromatic edges. The two colors assigned to the vertices of that copy define a cut of the original graph $H$, and the bichromatic edges are precisely the edges crossing this cut. Hence $H$ has a cut of size at least $k$.
\end{proof}

In particular, \textsc{FMMP} is para-NP-hard when parameterized by the number of colors on general graphs.

\section{Parameterized Hardness on Spiders}
\label{sec:spider-w1-hardness}

Theorem~\ref{thm:two-colors} shows that bounding the number of colors alone does not yield tractability on general graphs. We now ask whether the situation changes when the host graph is restricted to a very simple tree class. We show that it does not: \textsc{FMMP} remains W[1]-hard parameterized by the number of colors even on spiders. Moreover, the reduction requires only the body of the spider to be precolored.

We reduce from \textsc{Unary Bin Packing}. In this problem, we are given positive integer item sizes $a_1,\ldots,a_n$, encoded in unary, together with $k$ bins of capacity $B$, and we ask whether the items can be partitioned among the bins so that the total size assigned to every bin is at most $B$. Jansen et al.~\cite{JansenKratschMarxSchlotter2013} proved that \textsc{Unary Bin Packing} is W[1]-hard parameterized by the number $k$ of bins.

We may assume that the instance is \emph{tight}, that is, $\displaystyle\sum_{i=1}^{n}a_i=kB$. Indeed, if the total size is larger than $kB$, the instance is immediately a no-instance. Otherwise, we may add $kB-\displaystyle\sum_{i=1}^{n}a_i$ unit-size items. The resulting instance is feasible if and only if the original instance is feasible: any unused capacity of a feasible packing can be filled with the added unit items, while deleting the added items from a feasible packing of the new instance gives a feasible packing of the original one. Since the numbers are encoded in unary, this transformation has polynomial size.

\begin{theorem}
\label{thm:spider-w1-hard}
\textsc{FMMP} is W[1]-hard parameterized by the number of colors $c$, even when the input graph is a spider and only its body is precolored.
\end{theorem}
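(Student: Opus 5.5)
We reduce from tight \textsc{Unary Bin Packing}, parameterized by the number $k$ of bins. The plan is to build a spider whose body is precolored with a special color $0$ and which has one leg per item. The remaining colors $1,\ldots,k$ stand for the bins. The stick multiset will force every leg to be monochromatic in some bin color, and will force the total length of the legs of color $j$ to be $B$. The number of colors will then be $k+1$, so the reduction is parameterized.

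\textbf{Obstacle.} The edge joining the body to a leg colored $j$ has type $\{0,j\}$. The number of such sticks would have to equal the number of items placed in bin $j$, which is not known in advance. I handle this first, by normalizing the instance so that every bin receives exactly the same number of items.

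\textbf{Normalization.} Let $M=kB+1$. Replace each $a_i$ by $a_i'=a_i+M$, add $(k-1)n$ dummy items of size $M$, and set $B'=B+nM$. The total size is $kB+knM=kB'$, so the new instance is still tight. Any bin holding at least $n+1$ items has load at least $(n+1)M>B+nM=B'$. Since there are $kn$ items in total, every bin of a feasible packing of the new instance holds exactly $n$ items.

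The two instances are equivalent:
\begin{itemize}
\item Given a feasible packing of the original instance, distribute the dummy items so that every bin ends up with exactly $n$ items. A bin with $n_j$ original items then has load (original load)$+n_jM+(n-n_j)M\le B'$.
\item Conversely, given a feasible packing of the new instance, each bin holds exactly $n$ items, hence contributes exactly $nM$ from the added $M$'s. Its original load is therefore at most $B$.
\end{itemize}
All numbers remain polynomial because the input is encoded in unary.

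\textbf{Construction.} The spider has a body precolored $0$ and $kn$ legs, one per new item, with lengths $a_i'$ (dummy legs have length $M$). Its body degree is $kn\ge 3$. The sticks are:
\begin{itemize}
\item $n$ sticks of type $\{0,j\}$ for each $j\in\{1,\ldots,k\}$;
\item $B'-n$ sticks of type $\{j,j\}$ for each $j\in\{1,\ldots,k\}$.
\end{itemize}
The total is $kB'$, which equals the number of edges, and $B'-n\ge 0$.

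\textbf{Forward direction.} Given a packing of the normalized instance, color every vertex of each leg with the color of the bin of its item. Each color $j$ then has exactly $n$ first edges of type $\{0,j\}$ and $B'-n$ internal edges of type $\{j,j\}$. By Observation~\ref{obs:coloring-formulation} this gives a feasible instance.

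\textbf{Reverse direction.} Suppose a feasible coloring $\phi$ exists.
\begin{enumerate}
\item No stick has type $\{0,0\}$ or $\{j,j'\}$ with $j\neq j'\geq 1$, so no such edge can occur.
\item Every first edge of a leg is incident to the body, which has color $0$, and that edge cannot be $\{0,0\}$. So each first edge has type $\{0,j\}$ for some $j\ge1$. There are $kn$ first edges and exactly $kn$ sticks of type $\{0,\cdot\}$, so these sticks are used up there.
\item Hence no leg vertex receives color $0$. Since $\{j,j'\}$ edges are forbidden, each leg is monochromatic.
\item The legs of color $j$ number exactly $n$, and their lengths sum to $n+(B'-n)=B'$.
\end{enumerate}
This is a packing of the normalized instance, and hence, by the equivalence above, a packing of the original instance.

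The reduction runs in polynomial time and sets $c=k+1$, so it is a parameterized reduction. W[1]-hardness therefore follows from the result of Jansen et al.~\cite{JansenKratschMarxSchlotter2013}.
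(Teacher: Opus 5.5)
Your proof is correct. It is essentially the paper's reduction: the same source problem, $c=k+1$, a special body color, $n$ body-sticks per bin color, and the same forcing argument (all body-sticks sit on body edges, so every leg is monochromatic). The only real difference is where you absorb the unknown number of items per bin.

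\begin{itemize}
\item The paper keeps the original sizes. It uses item legs of length $a_i+1$, $B$ monochromatic sticks per bin color, and $n(k-1)$ buffer legs of length $1$. Each buffer leg consumes one surplus body-stick and no monochromatic stick.
\item You pad at the Bin Packing level instead, with offset $M=kB+1$ and dummy items of size $M$, so that every packing has exactly $n$ items per bin.
\end{itemize}

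Setting $M=1$ in your construction gives exactly the paper's spider and stick multiset: legs of length $a_i+1$, $(k-1)n$ legs of length $1$, and $B'-n=B$ sticks of each type $\{j,j\}$.

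The large offset is needed only for your intermediate claim that every packing of the normalized instance has $n$ items per bin. The reduction itself does not need it. Your step~4 already gets exactly $n$ legs of each color from the stick counts. In the forward direction you can distribute the dummies directly from a packing of the original instance.

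So your version buys a self-contained equivalence between two Bin Packing instances. The price is $\Theta(k^2nB)$ edges instead of $nk+kB$, plus one extra argument.

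A cosmetic point: in the paper's convention $\mathcal C_0(v)=0$ means ``uncolored''. Relabel your colors as $1,\ldots,k+1$, with the body colored $1$.
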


\begin{proof}
Let $a_1,\ldots,a_n$, $k$, and $B$ be a tight instance of \textsc{Unary Bin Packing}, so $\displaystyle\sum_{i=1}^{n}a_i=kB$. We construct an instance of \textsc{FMMP} using the color set $C=\{1,\ldots,k+1\}$. Color $1$ will be used for the body of the spider, while colors $2,\ldots,k+1$ correspond to the $k$ bins.

We construct a spider with body $b$, which is precolored with color $1$. For every item $a_i$, we introduce one \emph{item leg} of length $a_i+1$. In addition, we introduce $n(k-1)$ \emph{buffer legs}, each of length $1$. All vertices other than the body are initially uncolored. The spider therefore has $n+n(k-1)=nk$ legs and $nk+\displaystyle\sum_{i=1}^{n}a_i=nk+kB$ edges.

We now define the sticks. For every $j\in \{1,\ldots,k\}$, we introduce $n$ sticks of type $\{1,j+1\}$ and $B$ sticks of type $\{j+1,j+1\}$. Hence, there are $nk$ sticks containing color $1$ and $kB$ monochromatic sticks. The total number of sticks is therefore $nk+kB$, which is exactly the number of edges of the constructed spider.

We claim that the constructed \textsc{FMMP} instance is feasible if and only if the \textsc{Unary Bin Packing} instance is feasible.

Suppose first that the items admit a feasible packing into the $k$ bins. Since the instance is tight, every bin has total size exactly $B$. Consider an item $a_i$ assigned to bin $j$. On the first edge of the corresponding item leg, place a stick of type $\{1,j+1\}$. On each of the remaining $a_i$ edges of that leg, place a stick of type $\{j+1,j+1\}$. Thus, the vertices of the leg other than the body receive color $j+1$.

Let $r_j$ be the number of items assigned to bin $j$. Since these items have total size $B$, the item legs assigned to bin $j$ consume exactly $B$ sticks of type $\{j+1,j+1\}$ and $r_j$ sticks of type $\{1,j+1\}$. There remain $n-r_j$ sticks of type $\{1,j+1\}$. Over all bins, the number of remaining sticks containing color $1$ is $\displaystyle\sum_{j=1}^{k}(n-r_j)=nk-n=n(k-1)$, exactly the number of buffer legs. We place one remaining stick on each buffer leg. Since their leaves are uncolored, every such placement is feasible. Hence the constructed instance is feasible.

Conversely, suppose that the constructed \textsc{FMMP} instance is feasible. The body is precolored with color $1$, so every edge incident with the body must receive a stick containing color $1$. There are exactly $nk$ edges incident with the body and exactly $nk$ sticks containing color $1$. Consequently, every stick containing color $1$ is used on an edge incident with the body.

Consider an item leg corresponding to an item $a_i$. Its first edge must therefore receive a stick of type $\{1,j+1\}$ for some $j\in{1,\ldots,k}$. The first vertex after the body consequently receives color $j+1$. Since all sticks containing color $1$ are already used on edges incident with the body, every remaining edge of the item leg must receive a monochromatic stick. The next edge can therefore only receive a stick of type $\{j+1,j+1\}$, which forces the next vertex to have color $j+1$. Repeating this argument along the leg shows that all $a_i$ remaining edges receive sticks of type $\{j+1,j+1\}$.

We assign item $a_i$ to bin $j$. For every $j$, there are exactly $B$ sticks of type $\{j+1,j+1\}$. Moreover, the total number of monochromatic sticks is $kB=\displaystyle\sum_{i=1}^{n}a_i$, which is exactly the total number of non-body edges on the item legs. Hence all monochromatic sticks are used on item legs, and for every $j$ the items assigned to bin $j$ have total size exactly $B$. They therefore define a feasible packing into the $k$ bins.

The construction is polynomial because the item sizes are encoded in unary. Finally, the number of colors in the constructed instance is $c=k+1$. Thus the parameter depends only on the number of bins of the source instance, and the construction is a parameterized reduction. Since \textsc{Unary Bin Packing} is W[1]-hard parameterized by $k$~\cite{JansenKratschMarxSchlotter2013}, the theorem follows.
\end{proof}

Theorem~\ref{thm:spider-w1-hard} shows that the simple tree structure of spiders is not sufficient to make the number of colors an FPT parameter. In particular, the hardness already holds on graphs of treewidth one. The reduction, however, uses item legs whose lengths depend on the item sizes. This motivates the bounded-leg case considered in the next section.

\section{Fixed-Parameter Tractability for Short Spiders}
\label{sec:short-spiders-fpt}

The W[1]-hardness result of Section~\ref{sec:spider-w1-hardness} relies on spiders whose legs may be arbitrarily long. Iburi and Uehara showed that \textsc{FMMP} is polynomial-time solvable on completely uncolored spiders whose legs have length at most $2$, and left open the complexity of this graph class when vertices may be precolored~\cite{IburiUehara2024}. We make parameterized progress on this open problem: under arbitrary precoloring, \textsc{FMMP} is fixed-parameter tractable with respect to the number of colors $c$.

Our algorithm groups legs according to their precoloring patterns and records only how many legs of each pattern receive each possible final coloring. Since every leg contains at most two vertices other than the body, the number of relevant patterns and color assignments depends only on $c$. We then express feasibility as an integer linear program whose number of variables is bounded by a function of $c$.

\begin{theorem}
\label{thm:short-spider-fpt}
\textsc{FMMP} on spiders whose legs have length at most $2$ is fixed-parameter tractable parameterized by the number of colors $c$, even under arbitrary precoloring.
\end{theorem}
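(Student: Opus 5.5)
We will use the coloring formulation of Observation~\ref{obs:coloring-formulation} and reduce the problem to an integer linear program whose number of variables depends only on $c$. Lenstra's algorithm, in the improved versions of Kannan and of Frank and Tardos, decides feasibility of an ILP with $p$ variables in time $p^{O(p)}\cdot L^{O(1)}$, where $L$ is the encoding length. This gives fixed-parameter tractability as long as $p=f(c)$ and the coefficients are polynomially bounded. The body $b$ is handled first. If it is precolored, we keep its color; otherwise we branch over all $c$ choices of $\phi(b)$. This costs a factor of $c$, so from now on $\phi(b)=\beta$ is fixed.

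Next we classify the legs. A leg of length $1$ is $b,u$, and its pattern is the precolor $\mathcal C_0(u)\in\{0,\dots,c\}$. A leg of length $2$ is $b,u,w$, and its pattern is the pair $(\mathcal C_0(u),\mathcal C_0(w))\in\{0,\dots,c\}^2$. This gives at most $(c+1)+(c+1)^2$ pattern classes. For each class $P$ we let $n_P$ be the number of legs of that pattern, which we read directly from the input. For each final coloring $\gamma$ of the non-body vertices of a leg that is compatible with $P$ (so $\gamma\in C$ or $\gamma\in C^2$, agreeing with $P$ wherever $P$ is nonzero), we introduce a nonnegative integer variable $x_{P,\gamma}$. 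It counts the legs of pattern $P$ that receive coloring $\gamma$. The number of variables is at most $(c+1)c+(c+1)^2c^2=O(c^4)$.

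The constraints are the following. For every $P$ we require $\sum_\gamma x_{P,\gamma}=n_P$. For every type $\{i,j\}$ we require $\sum_{P,\gamma} t_{ij}(\gamma)\,x_{P,\gamma}=s_{ij}$. Here $t_{ij}(\gamma)\in\{0,1,2\}$ is the number of edges of type $\{i,j\}$ on a leg with body color $\beta$ and leg coloring $\gamma$. For a length-$2$ leg colored $(\gamma_1,\gamma_2)$, these edges are $\{\beta,\gamma_1\}$ and $\{\gamma_1,\gamma_2\}$. There are $O(c^2)$ such type constraints, and every coefficient is at most $2$.

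For correctness, suppose $\phi$ is a valid coloring. Setting $x_{P,\gamma}$ to the number of legs of pattern $P$ colored by $\gamma$ satisfies all constraints, because every edge lies on exactly one leg. Conversely, suppose we have an integral solution. Within each class $P$ all legs are interchangeable, since they share the same body and the same precolors. We can therefore hand out the colorings $\gamma$ to the legs of class $P$ arbitrarily, using $\gamma$ exactly $x_{P,\gamma}$ times. The resulting $\phi$ extends $\mathcal C_0$, and its edge-type multiplicities equal the $s_{ij}$.

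I expect no step to be a real obstacle; the points to be careful about are the following. The right-hand sides $n_P$ and $s_{ij}$ are at most $n$, so $L$ is polynomial in the input size. The body is handled by branching rather than by adding variables. The degenerate case in which the spider has only two legs, so that $b$ is not a true "body", needs care, since the graph is then a path with at most $4$ edges and can be solved by brute force. The total running time is $c\cdot (c^4)^{O(c^4)}\cdot n^{O(1)}$, which is FPT in $c$.
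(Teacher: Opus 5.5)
Your proposal is correct and follows essentially the same route as the paper: fix or branch on the body color, group legs by precoloring pattern, introduce one count variable per pattern and compatible leg coloring ($O(c^4)$ in total), impose the pattern-count and edge-type-count constraints, and solve the resulting ILP with Lenstra's algorithm. Your extra remark about a body of degree two is harmless but unnecessary, since the paper's definition of a spider already requires the body to have degree greater than two.
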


\begin{proof}
Let $G$ be a spider with body $b$, and let $C=\{1,\ldots,c\}$. We first fix the final color $s\in C$ of the body. If $b$ is precolored, there is only one possible choice for $s$. Otherwise, we try all $c$ possibilities. It therefore suffices to solve the problem for a fixed body color $s$.

We use $0$ to denote an initially uncolored vertex. Since every leg has length at most $2$, there are two types of legs.

Consider first a leg of length $1$, consisting of the edge $bv$. Its precoloring pattern is determined by $\mathcal C_0(v)\in{0,\ldots,c}$. For every $p\in{0,\ldots,c}$, let $L_p$ denote the number of length-$1$ legs whose leaf is precolored $p$. We introduce an integer variable $x_{p,a}$ for every $p\in{0,\ldots,c}$ and $a\in C$. The variable $x_{p,a}$ denotes the number of such legs whose leaf receives final color $a$. We require $x_{p,a}=0$ whenever $p\neq 0$ and $p\neq a$, and for every $p$ we impose $ \displaystyle\sum_{a\in C}x_{p,a}=L_p$.

Now consider a leg of length $2$, written as $buv$, where $u$ is adjacent to the body and $v$ is the leaf. Its precoloring pattern is the pair $(\mathcal C_0(u),\mathcal C_0(v))$. For every $p,q\in \{0,\ldots,c\}$, let $L_\{p,q\}$ denote the number of length-$2$ legs having pattern $(p,q)$. We introduce an integer variable $y_{p,q,a,d}$ for every $p,q\in{0,\ldots,c}$ and $a,d\in C$. The variable $y_{p,q,a,d}$ denotes the number of these legs for which $u$ receives final color $a$ and $v$ receives final color $d$.

Again, precoloring is enforced by setting $y_{p,q,a,d}=0$ whenever $p\neq 0$ and $p\neq a$, or whenever $q\neq 0$ and $q\neq d$. For every pair $(p,q)$, we impose $\displaystyle\sum_{a,d\in C}y_{p,q,a,d}=L_\{p,q\}$.

It remains to ensure that the resulting coloring produces exactly the edge-type multiplicities prescribed by the sticks.

For every $1\leq i\leq j\leq c$, let $s_{ij}$ denote the number of sticks of type $\{i,j\}$. A length-$1$ leg whose leaf receives color $a$ contributes one edge of type ${s,a}$. A length-$2$ leg colored $s-a-d$ contributes one edge of type ${s,a}$ and one edge of type ${a,d}$.

For every unordered color pair $\{i,j\}$, we therefore impose

\[
\begin{aligned}
s_{ij}
&=
\displaystyle\sum_{p=0}^{c}\displaystyle\sum_{a\in C}
\mathbf{1}\!\left[\{s,a\}=\{i,j\}\right]x_{p,a}
\\
&\quad+
\displaystyle\sum_{p,q=0}^{c}\displaystyle\sum_{a,d\in C}
\mathbf{1}\!\left[\{s,a\}=\{i,j\}\right]y_{p,q,a,d}
\\
&\quad+
\displaystyle\sum_{p,q=0}^{c}\displaystyle\sum_{a,d\in C}
\mathbf{1}\!\left[\{a,d\}=\{i,j\}\right]y_{p,q,a,d}.
\end{aligned}
\]

where $\mathbf{1}[P]$ is $1$ if the statement $P$ is true and $0$ otherwise. Notice that a coefficient may be $2$ if the two edges of a length-$2$ leg have the same type.

Together with non-negativity and integrality of all variables, these constraints form an integer linear program.

We now prove correctness. Suppose first that the \textsc{FMMP} instance has a feasible solution whose body has color $s$. For each precoloring pattern, set the corresponding variables equal to the number of legs receiving each possible final coloring. The pattern constraints are then satisfied by construction. Moreover, the edge-type constraints hold because they simply count, for each type $\{i,j\}$, the number of graph edges of that type in the feasible coloring. Hence the integer program is feasible.

Conversely, suppose the integer program has a feasible integer solution. For every precoloring pattern, assign the indicated final colors to exactly the number of legs specified by the corresponding variables. Legs with the same precoloring pattern are interchangeable, so this can always be done. The constraints defining the variables ensure that all initial precolors are respected.

The edge-type constraints guarantee that, for every $\{i,j\}$, the number of graph edges whose endpoints receive colors $i$ and $j$ is exactly $s_{ij}$. By Observation~\ref{obs:coloring-formulation}, the resulting coloring therefore defines a feasible \textsc{FMMP} solution.

There are at most $(c+1)c$ variables of the form $x_{p,a}$ and at most $(c+1)^2c^2$ variables of the form $y_{p,q,a,d}$. Thus the total number of variables is $(c+1)c+(c+1)^2c^2=O(c^4)$. The number of constraints is also bounded by a function of $c$.

Integer linear programming with a fixed number of variables can be solved in polynomial time~\cite{Lenstra1983}. Since the dimension of our integer program depends only on $c$, it can be solved in time $f(c)\cdot |I|^{O(1)}$ for some computable function $f$. Trying all possible colors of an initially uncolored body introduces only an additional factor of $c$. Hence \textsc{FMMP} is fixed-parameter tractable parameterized by $c$ on spiders whose legs have length at most $2$.
\end{proof}

Theorem~\ref{thm:short-spider-fpt} yields a sharp contrast with Theorem~\ref{thm:spider-w1-hard}: allowing arbitrarily long legs makes the problem W[1]-hard parameterized by $c$, whereas restricting all legs to length at most $2$ yields fixed-parameter tractability even with arbitrary precoloring. The classical complexity of the latter case when $c$ is unbounded remains open.

\section{Hardness for Spiders with Legs of Length Three}
\label{sec:length-three}

The polynomial-time algorithm of Iburi and Uehara for completely uncolored spiders whose legs have length at most two~\cite{IburiUehara2024}, together with Theorem~\ref{thm:short-spider-fpt}, suggests that the maximum leg length plays an important role in the complexity of \textsc{FMMP}. We now consider the first larger value and show that, when precoloring is allowed, requiring every leg to have length exactly three is already sufficient for NP-completeness.

We reduce from \textsc{Partial Orientation}. An instance consists of a graph $H=(V,E)$ together with three nonnegative integers $\rho(v)$, $\delta(v)$, and $\theta(v)$ for every $v\in V$, satisfying $\rho(v)+\delta(v)+\theta(v)=\deg_H(v)$. The objective is to decide whether every edge can be either oriented in one of its two directions or left undirected so that every vertex $v$ has indegree $\rho(v)$, outdegree $\delta(v)$, and undirected degree $\theta(v)$. Pálvölgyi proved that \textsc{Partial Orientation} is NP-complete, even on planar graphs~\cite{Palvolgyi2009}.

\begin{theorem}
\label{thm:length-three}
\textsc{FMMP} is NP-complete on spiders in which every leg has length exactly $3$, when precoloring is allowed.
\end{theorem}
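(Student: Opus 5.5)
The plan is to reduce from \textsc{Partial Orientation}, whose NP-completeness is due to Pálvölgyi~\cite{Palvolgyi2009}. Membership in NP is immediate, exactly as in Theorem~\ref{thm:two-colors}. The structural idea is to let the body carry one fixed color $0$ (precolored) and to give every edge of $H$ a small number of legs of length $3$. The edges incident with the body then play the role of a global counter, while the middle vertices of the legs are precolored with edge-specific colors that "lock" a leg to its edge of $H$.

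\textbf{Colors and body sticks.} For every vertex $v$ of $H$ I introduce three colors $v^{\mathrm{in}}$, $v^{\mathrm{out}}$, $v^{\mathrm{und}}$. For every edge $e$ of $H$ I introduce a private color $e$ and precolor with $e$ the middle vertex $y$ of each leg $b\,x\,y\,z$ belonging to $e$.

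Since $b$ is precolored $0$, every body edge receives a stick containing $0$. I will make the number of such sticks equal to the number of legs, as in the proof of Theorem~\ref{thm:spider-w1-hard}. Their types are
\begin{itemize}
\item $\{0,v^{\mathrm{in}}\}$ with multiplicity $\rho(v)$,
\item $\{0,v^{\mathrm{out}}\}$ with multiplicity $\delta(v)$,
\item $\{0,v^{\mathrm{und}}\}$ with multiplicity $\theta(v)$,
\end{itemize}
plus padding types whose other color is a neutral color that no edge gadget can ever forbid. Counting arguments of the same kind as in Theorem~\ref{thm:spider-w1-hard} then force all $0$-sticks onto body edges. As a consequence, the colors of the first vertices $x$ of the legs realize exactly the prescribed in-, out- and undirected degrees.

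\textbf{Edge gadget (the main obstacle).} For $e=uv$ I want the locally admissible colorings of the legs of $e$ to correspond exactly to the three states of $e$:
\begin{itemize}
\item $u\to v$, which should put $u^{\mathrm{out}}$ on one first vertex and $v^{\mathrm{in}}$ on another;
\item $v\to u$, symmetrically;
\item undirected, which should put $u^{\mathrm{und}}$ and $v^{\mathrm{und}}$ on them.
\end{itemize}
Since $e$ is a private color appearing only on precolored middle vertices, every stick containing $e$ must lie on an edge $xy$ or $yz$ of a leg of $e$. So the multiset of $e$-sticks exactly constrains the pair (color of $x$, color of $z$) on these legs.

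My first attempt uses two legs per edge, one "at $u$" and one "at $v$", with private middle colors $e_u$ and $e_v$. On the leg at $u$, the $e_u$-sticks would be chosen so that its first vertex receives one of $u^{\mathrm{out}},u^{\mathrm{in}},u^{\mathrm{und}}$. Its leaf would then receive a matching "signal" color $\sigma\in\{\mathrm{out},\mathrm{in},\mathrm{und}\}$ specific to $e$.

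The difficulty is that a fixed multiset of $e_u$-sticks fixes the pair of colors at $x$ and $z$, not a choice. The choice has to be created by letting the leaf sticks of the two legs of $e$ share a common pool, for instance sticks $\{e_u,\sigma_e\}$ and $\{e_v,\sigma'_e\}$. These must be coupled so that "out at $u$" is forced to pair with "in at $v$", and "undirected" with "undirected". I expect that checking this coupling admits exactly the three intended combinations, with no spurious ones, will require some trial: possibly a third leg per edge, or auxiliary legs whose only purpose is to absorb unused gadget sticks. This is the step I would work out first and most carefully.

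\textbf{Correctness.} Given a partial orientation, I color each gadget according to the state of its edge. The body counts then match $\rho$, $\delta$, $\theta$ by construction, and the padding legs absorb the remaining $0$-sticks. Conversely, from a feasible coloring I read off the state of every edge from its gadget. The forced placement of all $0$-sticks on body edges then gives the correct degrees at every vertex, so the state assignment is a valid partial orientation. The construction is clearly polynomial, and every leg has length exactly $3$ by design.
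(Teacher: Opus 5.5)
\textbf{There is a genuine gap.} The edge gadget, which you yourself call the main obstacle, is never constructed, and the mechanism you sketch for it cannot work.

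Every stick containing a private middle color $e_u$ must lie on a leg whose middle vertex is precolored $e_u$. So sticks $\{e_u,\sigma_e\}$ and $\{e_v,\sigma'_e\}$ do not form a common pool. Each leg's two middle-color sticks are fixed, and the only freedom is which of them goes on $xy$, with the other going to the unconstrained leaf.

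The same problem affects your whole framework. Assume the body is precolored $0$, every middle vertex is precolored with a color $m$ occurring only there and in sticks $\{m,c\}$ with $c$ not a middle color, and first vertices and leaves are free or precolored. Then a solution is exactly a choice, for each middle color $m$ carried by $r_m$ legs, of which $r_m$ of its $2r_m$ sticks go on $xy$ edges. The constraint is that the number of first vertices of color $c$ equals the multiplicity of $\{0,c\}$. This is an integral transportation problem solvable by max-flow. Hence no gadget of this kind, with or without auxiliary absorbing legs of the same form, can realize the three-state edge constraint of \textsc{Partial Orientation} unless $\mathrm{P}=\mathrm{NP}$.

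The missing idea is to make the sticks that encode $H$ movable between legs, rather than locking legs to edges with private colors. The paper uses one color $x_v$ per vertex and one stick $\{x_u,x_v\}$ per edge of $H$. It distinguishes out/in/undirected by \emph{position} rather than by color, using three kinds of legs:
\begin{itemize}
\item directed legs $s\,x_u\,x_v\,t$, with the leaf precolored $t$;
\item undirected legs $s\,q\,x_u\,x_v$, with the first vertex precolored $q$;
\item absorber legs $s\,p\,q\,x_v$.
\end{itemize}
The sticks are $\delta(v)$ copies of $\{s,x_v\}$, $\rho(v)$ copies of $\{x_v,t\}$, $\theta(v)$ copies of $\{q,x_v\}$, and $h$ copies each of $\{s,q\}$, $\{s,p\}$, $\{p,q\}$. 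Counting the $s$-, $t$- and $q$-sticks forces each directed leg to carry an edge stick in its middle, oriented from its first to its third vertex's color. It also forces the remaining $h$ edge sticks onto the last edges of undirected legs. The prescribed degrees then follow from the stick multiplicities. Your idea of encoding degrees through multiplicities of color-$0$ sticks is in the same spirit, but without a shared edge-stick mechanism of this kind the reduction does not go through.
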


\begin{proof}
Membership in NP is immediate. For NP-hardness, let $H=(V,E)$ together with the functions $\rho$, $\delta$, and $\theta$ be an instance of \textsc{Partial Orientation}. We may assume that $\displaystyle\sum_{v\in V}\delta(v)=\displaystyle\sum_{v\in V}\rho(v)$, since otherwise the instance is immediately infeasible. Let $k=\displaystyle\sum_{v\in V}\delta(v)=\displaystyle\sum_{v\in V}\rho(v)$ and let $2h=\displaystyle\sum_{v\in V}\theta(v)$. Since $\rho(v)+\delta(v)+\theta(v)=\deg_H(v)$ for every $v$, we have $|E|=k+h$.

For every vertex $v\in V$, introduce a distinct color $x_v$. We additionally use four colors $s,t,p,q$. The body of the constructed spider is precolored with $s$.

The set of sticks is defined as follows. For every edge $uv\in E$, we introduce one stick of type ${x_u,x_v}$. For every vertex $v\in V$, we introduce $\delta(v)$ sticks of type $\{s,x_v\}$, $\rho(v)$ sticks of type ${x_v,t}$, and $\theta(v)$ sticks of type $\{q,x_v\}$. Finally, we introduce $h$ sticks of each of the types $\{s,q\}$, $\{s,p\}$, and $\{p,q\}$.

The total number of sticks is
\((k+h)+k+k+2h+h+h+h=3k+6h.\)

We construct a spider with $k+2h$ legs, all of length exactly $3$. The legs are divided into three types.

First, we introduce $k$ \emph{directed legs}. On each such leg, only the leaf is precolored, with color $t$.

Second, we introduce $h$ \emph{undirected legs}. On each such leg, the first vertex after the body is precolored with color $q$, while the other two vertices are uncolored.

Third, we introduce $h$ \emph{absorber legs}. On each absorber leg, the first two vertices after the body are precolored with colors $p$ and $q$, respectively, while the leaf is uncolored.

Thus the spider has $3(k+2h)=3k+6h$ edges, exactly as many as there are sticks.

We prove that the constructed \textsc{FMMP} instance is feasible if and only if the \textsc{Partial Orientation} instance is feasible.

Suppose first that $H$ admits a partial orientation satisfying all prescribed degrees. Consider a directed edge $u\rightarrow v$. We associate it with one directed leg and color the two internal vertices of that leg with $x_u$ and $x_v$, in this order. We place sticks of types ${s,x_u}$, ${x_u,x_v}$, and ${x_v,t}$ on the three edges of the leg.

There are exactly $\delta(v)$ directed edges leaving each vertex $v$ and exactly $\rho(v)$ directed edges entering it. Hence all sticks of types $\{s,x_v\}$ and ${x_v,t}$ are used exactly once.

Now consider an undirected edge $uv$. Choose an arbitrary ordering of its endpoints, say $(u,v)$. We associate this edge with one undirected leg and one absorber leg. On the undirected leg, use the colors $s,q,x_u,x_v$ and place sticks of types $\{s,q\}$, ${q,x_u}$, and ${x_u,x_v}$. On the corresponding absorber leg, use the colors $s,p,q,x_v$ and place sticks of types $\{s,p\}$, $\{p,q\}$, and $\{q,x_v\}$.

Every undirected edge therefore consumes one stick $\{q,x_v\}$ for each of its two endpoints. Since the undirected degree of vertex $v$ is $\theta(v)$, all $\theta(v)$ sticks of type $\{q,x_v\}$ are used. There are exactly $h$ undirected edges, so all $h$ sticks of types $\{s,q\}$, $\{s,p\}$, and $\{p,q\}$ are also used. Hence every stick is placed and the constructed instance is feasible.

Conversely, suppose that the constructed \textsc{FMMP} instance is feasible. Consider first the absorber legs. Their first edge joins vertices precolored $s$ and $p$, so it must receive a stick of type $\{s,p\}$. Similarly, their second edge must receive a stick of type $\{p,q\}$. Since there are exactly $h$ absorber legs and exactly $h$ sticks of each of these types, all such sticks are consumed by absorber legs.

Likewise, the first edge of every undirected leg joins vertices precolored $s$ and $q$, and must therefore receive a stick of type $\{s,q\}$. Hence all $h$ sticks of type $\{s,q\}$ are consumed by the undirected legs.

After these placements, the only unused sticks containing color $s$ are the $k$ sticks of types $\{s,x_v\}$. There are exactly $k$ remaining edges incident with the body, namely the first edges of the directed legs. Therefore every directed leg begins with some stick ${s,x_u}$.

Similarly, the only sticks containing color $t$ are the $k$ sticks of types ${x_v,t}$, and the only vertices precolored $t$ are the leaves of the $k$ directed legs. Thus the last edge of every directed leg receives some stick ${x_v,t}$.

Consequently, the middle edge of every directed leg joins two vertices colored $x_u$ and $x_v$. It must therefore receive one of the sticks corresponding to an edge $uv\in E$. We orient this source edge from $u$ to $v$.

Exactly $k$ source-edge sticks are used in this way. Since there are $k+h$ source-edge sticks in total, exactly $h$ remain.

All sticks of types $\{s,p\}$, $\{p,q\}$, and $\{s,q\}$ have already been consumed. The remaining sticks incident with color $q$ are therefore precisely the $2h$ sticks of types $\{q,x_v\}$. There are exactly $2h$ still-unfilled edges incident with a vertex precolored $q$: the second edges of the $h$ undirected legs and the final edges of the $h$ absorber legs. Hence all sticks of type $\{q,x_v\}$ are used on these positions.

It follows that the final edge of every undirected leg must receive one of the $h$ remaining source-edge sticks. We declare the corresponding source edges to be undirected.

We have now assigned every edge of $H$ either a direction or the undirected status. It remains to verify the prescribed degrees. A directed edge leaves $v$ exactly when its directed leg begins with a stick of type $\{s,x_v\}$. Since there are exactly $\delta(v)$ such sticks, the outdegree of $v$ is $\delta(v)$. Similarly, a directed edge enters $v$ exactly when its directed leg ends with a stick of type ${x_v,t}$, so its indegree is $\rho(v)$.

Finally, every edge incident with $v$ that is not directed is undirected. Therefore its undirected degree is
$\deg_H(v)-\delta(v)-\rho(v)=\theta(v)$.

Thus the resulting partial orientation satisfies all prescribed degrees. The original \textsc{Partial Orientation} instance is feasible if and only if the constructed \textsc{FMMP} instance is feasible, completing the reduction.
\end{proof}

Theorem~\ref{thm:length-three} shows that the first leg length beyond the previously known tractable uncolored case already permits NP-completeness when precoloring is available. Whether precoloring is necessary for hardness at this exact leg length remains open.

\section{Conclusion and Open Problems}
\label{sec:conclusion}

We studied how the number of colors and the structure of spiders affect the complexity of \textsc{Matching-Match}. We showed that two colors already suffice for NP-completeness on general graphs, even without precolored vertices and with maximum degree three. On spiders, \textsc{FMMP} is W[1]-hard parameterized by $c$, even when only the body is precolored.

Bounding the leg length changes this picture. For spiders whose legs have length at most two, we gave an FPT algorithm parameterized by $c$ under arbitrary precoloring, while spiders whose legs all have length exactly three are already NP-complete when precoloring is allowed. Two main questions remain open. First, is \textsc{FMMP} polynomial-time solvable or NP-complete on arbitrarily precolored spiders with legs of length at most two when $c$ is part of the input. Second, what is the complexity of completely uncolored spiders whose legs have length exactly three~\cite{IburiUehara2024}. From the parameterized perspective, it is also natural to determine whether the latter case is FPT or W[1]-hard parameterized by $c$.

\bibliographystyle{splncs04}
\bibliography{bibl}

\end{document}